\newtheorem{theorem}{Theorem}[section]
\newtheorem{corollary}{Corollary}[theorem]
\newtheorem{lemma}[theorem]{Lemma}
\newtheorem*{remark}{Remark}
\title{Evaluating the Properties of a First Choice Weighted Approval Voting System}
\author{Peter Butler and Jerry Lin}
\begin{document}
\maketitle
\begin{center}
\textbf{Abstract}\\
\end{center}
Plurality and approval voting are two well-known voting systems with different strengths and weaknesses. In this paper we consider a new voting system we call beta(k) which allows voters to select a single first-choice candidate and approve of any other number of candidates, where $k$ denotes the relative weight given to a first choice; this system is essentially a hybrid of plurality and approval. Our primary goal is to characterize the behavior of beta(k) for any value of $k$. Under certain reasonable assumptions, beta(k) can be made to mimic plurality or approval voting in the event of a single winner while potentially breaking ties otherwise. Under the assumption that voters are honest, we show that it is possible to find the values of $k$ for which a given candidate will win the election if the respective approval and plurality votes are known. Finally, we show how some of the commonly used voting system criteria are satisfied by beta(k).

\begin{center}
\textbf{Introduction}\\
\end{center}

Plurality is a popular voting system where each voter votes for exactly one candidate; close to a third of all countries use plurality for government elections (Ace Project, n.d.). One major drawback of plurality is that it tends to force a two-party system as a result of Duverger's Law (Riker 1982, 753). This makes third-party candidates practically noncontenders in plurality systems. Another issue with plurality arises in elections with more than two candidates. In such cases, a candidate who does not have a majority of the votes may still be elected, as may a candidate who is less preferred overall than other candidates (Brams and Fishburn, 2007, 1-3). In particular, plurality can at times elect more extremist candidates in elections when there are more than two candidates since a candidate does not need a simple majority of the votes to win such an election. This is seen as a problem with plurality by critics. Approval voting is an alternative system where voters may approve of any number of candidates. It has been adopted by several professional groups including the American Mathematical Society and the American Statistical Association (Karlin and Peres, 2017, 223). One clear advantage of approval voting is that minority candidates actually stand a chance to win (Brams and Fishburn, 2007, 7). While approval does not suffer from Duverger's Law as does plurality, approval is not a Pareto-efficient system; a candidate who is unanimously preferred to another candidate is not guaranteed to have a higher approval score (we show this in sub-section 4.4). This is mainly due to the fact that, under approval, voters cannot express the degree to which they prefer one candidate over another. Although there is a scarcity of empirical evidence on the efficacy of approval voting, some critics believe that it promotes more centrist candidates than plurality (Brams and Fishburn, 2007, 10; Cox, 1985, 118).
\bigskip

  We have developed a new voting system, called beta(k), that we believe can function as a reasonable hybrid of plurality and approval. Under beta(k), a voter may approve of candidates while also denoting their top preference. In this system, approvals have a weight of $1$ while first-choice votes are weighted by $k \geq 1$. We believe that this system maintains the benefits of approval voting and mitigates some of the potential drawbacks. Introducing the additional weight for first-choice preferences gives voters more options for expressing their levels of preference and could discourage candidates from simply taking moderate stances on all issues. One of our goals is to find how the value of $k$ affects the outcome of a beta(k) election. We make the reasonable assumptions that any voter who would approve of a candidate in an approval election would approve of that candidate in a beta(k) election, and that any voter who would vote for a candidate in a plurality election would denote that candidate as their first-choice in a beta(k) election. We find values of $k$ that guarantee a beta(k) winner to agree with a plurality winner or an approval winner, assuming a single winner exists, while potentially breaking ties otherwise. We also show that by knowing the approval votes and plurality votes corresponding to a set of voters, we can find the subset of candidates who could potentially win in a beta(k) election, and we can find the ranges of $k$ for which each of these candidates will win. Lastly, we compare the three systems by checking their compliance with several common voting criteria.
 \bigskip
 
Much research has been done on the behavior of different voting systems and many criteria have been developed to determine which systems function "better." In general, while the concept of better and worse voting systems depends on criteria being checked, we believe that some of the most desirable properties are non-dictatorship, monotonicity, unanimity, and Pareto-efficiency (all of which we formally define in section 4).  
\bigskip

\section{Summary of relevant research}
Brams and Fishburn have speculated that some elections would have had quite different results had approval been used instead of plurality (Brams and Fishburn, 2007, 59-69). Karlin and Peres have shown previously that approval and plurality are monotonic (Karlin and Peres, 2017, 221-224).  One issue frequently discussed by theorists is the possibility of voting strategically. The Gibbard-Satterthwaite theorem shows that no system (except dictatorship) is strategy-proof when there are more than two candidates, a famous result from the 1970s that is proven elegantly in (Karlin and Peres, 2017, 226-228). There is theoretical and experimental research on voting strategies for plurality and approval. In a plurality system where voters act strategically, both the Condorcet Winner and Condorcet Loser Criteria are violated (Niou, 2001, 225). However, a recent experimental study with students at New York University suggests that plurality is less manipulable and more socially efficient than approval (Bassi, 2015, 77).

\section{Basic Definitions}
\flushleft{\textbf{Definition 2.1}} Let $\textbf{X} = \{C_1, C_2, \ldots, C_c\}$ be a finite, ordered set with cardinality $c \in \mathbb{N}$ such that $c \geq 1$. We refer to the elements of $\textbf{X}$ as \textbf{candidates}. 

\flushleft{\textbf{Definition 2.2}}
We assume that there exists a preference relation for each voter on the set of candidates. We denote this relation by \textbf{$\succ$}, i.e. $A\succ B$ means that the voter prefers A over B. We assume that this relation is \textbf{complete} (for all candidates A and B, either $A\succ B$ or $B\succ A$) and \textbf{transitive} (if $A\succ B$ and $B \succ C$, then $A\succ C$). Furthermore, we denote the $i^{th}$ voter's preference relation by \textbf{$\succ_i$}, and the collection of preference relations for all voters is a \textbf{preference profile} $(\succ_1,\succ_2,\ldots,\succ_n)$. 

\textit{Note: Definition 2.2 is based on definitions in (Karlin and Peres 2017, 218)}.

\flushleft{\textbf{Definition 2.3}}
A \textbf{vote} over \textbf{X} is a \textit{non-zero} real-valued vector in $\mathbb{R}^c$ such that the $j^{th}$ element in the vector corresponds to candidate $C_j$ in $\textbf{X}$.

\flushleft{\textbf{Definition 2.4}}
Let $k\in \mathbb{R}$ such that $k\geq 1$. A \textbf{beta(k) vote} over \textbf{X} is a type of vote in $\mathbb{R}^c$ such that each element in the vector is in $\{0,1,k\}$ and exactly one element is equal to k.

\flushleft{\textbf{Definition 2.5}}
An \textbf{approval vote} is a type of vote in $\mathbb{R}^c$ such that each element in the vector is in $\{0,1\}$ and at least one element is equal to 1.

\flushleft{\textbf{Definition 2.6}}
A \textbf{plurality vote} is a type of vote in $\mathbb{R}^c$ such that each element in the vector is in $\{0,1\}$ and exactly one element is equal to 1.

\flushleft{\textbf{Definition 2.7}}
A \textbf{vote matrix} is a $n \times c$ matrix of real numbers such that every row within the matrix is a vote of the same type. A \textbf{beta(k) matrix} is a vote matrix in which all the votes are beta(k) votes and every beta(k) vote uses the same $k$ value. An \textbf{approval matrix} is a vote matrix in which all the votes are approval votes. A \textbf{plurality matrix} is a vote matrix in which all the votes are plurality votes. 

\flushleft{\textbf{Definition 2.8}}
Let $K(P,A)$ denote a function that takes in a $n \times c$ plurality matrix $P$ and a $n \times c$ approval matrix $A$ and outputs a $n \times c$  beta(k) matrix $B$ such that: 

$$B = K(P,A) = P\cdot(k-1) + A.$$

\flushleft{\textbf{Definition 2.9}}
Let $E(\textbf{X},Z)$ denote a function that takes in a set of candidates $\textbf{X}$ with cardinality $c$ and a $n \times c$ vote matrix $Z$ as inputs and outputs a $1 \times c$ matrix equal to the sum of the row vectors of $Z$. We call this $1 \times c$ matrix the \textbf{score} for the vote matrix $Z$. The type of the matrix used for the score is also used to describe the score (e.g. a plurality score is a score for a plurality matrix). We call any candidate in $\textbf{X}$ whose corresponding element in a score is equal to the maximum value in the score a \textbf{winner} for that score (there can be multiple winners if multiple candidates are tied for the highest score). Any candidate who is not a winner for that same score is a \textbf{loser} for that score. Any time there is more than one winner for a given score, we call the score a \textbf{tie}.  Winners, losers, scores, and ties can be described by the type of score used to determine them (e.g. a \textbf{plurality winner} is the candidate whose corresponding element in a plurality score is equal to the maximum value in the plurality score).

\flushleft{\textbf{Definition 2.10}}
A \textbf{voting system} (we will often refer to this as simply a \textbf{system} for short) is a method for selecting a single candidate from a set of candidates. In particular, an \textbf{approval system}, \textbf{plurality system}, or \textbf{beta(k) system} is a voting system that randomly selects a candidate from the set of approval winners, plurality winners, or beta(k) winners (respectively).

\section{Beta(k) Properties}

From now on, \textbf{$n_i$} will denote the $i^{th}$ row of a vote matrix, $n$ will denote the number of votes, $c$ will denote the number of candidates, and $\textbf{X} = \{C_1,C_2,\ldots,C_c\}$ will denote the set of candidates. $P$ will denote a $n \times c$ plurality matrix, $A$ will denote a $n \times c$ approval matrix, and $B$ will denote a $n \times c$ beta(k) matrix; $p_j$ will denote the $j^{th}$ element in $E(\textbf{X},P)$, $a_j$ will denote the $j^{th}$ element in $E(\textbf{X},A)$, and $b_j$ will denote the $j^{th}$ element in $E(\textbf{X},B)$.

\begin{lemma}
For any type of score, the set of winners is non-empty.
\end{lemma}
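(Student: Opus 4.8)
The plan is to reduce the statement to the elementary fact that every nonempty finite set of real numbers attains its maximum. By Definition 2.9, a score for a vote matrix $Z$ over $\textbf{X}$ is the $1 \times c$ matrix $E(\textbf{X},Z)$ obtained by summing the row vectors of $Z$; I would write its entries as $s_1, s_2, \ldots, s_c \in \mathbb{R}$, where $s_j$ is the entry corresponding to candidate $C_j$. By Definition 2.1 we have $c \geq 1$, so $\{s_1, s_2, \ldots, s_c\}$ is a nonempty finite subset of $\mathbb{R}$.

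Next I would invoke the fact that a nonempty finite subset of $\mathbb{R}$ has a maximum element (by a routine induction on $c$, or by appeal to the least-upper-bound property of $\mathbb{R}$). Setting $M = \max_{1 \le j \le c} s_j$, the definition of maximum guarantees an index $j^\ast$ with $s_{j^\ast} = M$. By Definition 2.9, the candidate $C_{j^\ast}$, whose corresponding score entry equals the maximum value in the score, is then a winner for that score. Hence the set of winners is non-empty.

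Finally I would note that nothing in this argument refers to the \emph{type} of the score: whether $Z$ is a beta(k) matrix, an approval matrix, or a plurality matrix, its score $E(\textbf{X},Z)$ is still a length-$c$ real vector with $c \geq 1$, so the reasoning applies uniformly to all cases covered by the phrase ``for any type of score.'' (The argument even handles the degenerate case in which $Z$ has no rows, where the score is the zero vector and every one of the $c \geq 1$ candidates is a winner.) There is no real obstacle here; the only point requiring a moment's care is recognizing that ``for any type of score'' needs no case analysis, since the proof is entirely the extreme value principle for finite subsets of $\mathbb{R}$ together with $c \geq 1$.
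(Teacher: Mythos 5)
Your proof is correct and takes essentially the same approach as the paper's: since $c \geq 1$, the score is a nonempty finite list of reals and therefore attains a maximum, so at least one candidate is a winner. You simply spell out the extreme-value step that the paper leaves implicit.
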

\begin{proof}
By definition, the set of candidates has cardinality $c \geq 1.$ Within the $1 \times c$ score, there must be a maximum element. Therefore, the set of winners is non-empty.
\end{proof}
\begin{theorem}
If
\begin{itemize}
\item $B_{i,j}=k$ if and only if $P_{i,j}=1$, and
\item $k > n$, 
\end{itemize}
then the set of beta(k) winners is a subset of the set of plurality winners.\\
\end{theorem}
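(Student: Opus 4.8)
The plan is to decompose the beta(k) score of each candidate into a large multiple of its plurality score plus a small bounded remainder, and then observe that once $k$ exceeds $n$ this large multiple dominates, so a strictly larger plurality score forces a strictly larger beta(k) score.

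First I would fix a column $j$ and count, among the rows of $B$, how many have entry $k$, $1$, and $0$ in that column. By the first hypothesis $B_{i,j}=k$ if and only if $P_{i,j}=1$, so the number of rows with $B_{i,j}=k$ is exactly $\#\{i : P_{i,j}=1\} = p_j$. Every remaining row contributes either $0$ or $1$ to $b_j$. Writing $r_j = \#\{i : B_{i,j}=1\}$, this gives the key identity $b_j = k\,p_j + r_j$, with the bounds $0 \le r_j \le n - p_j \le n$. Note that here $p_j$ is a nonnegative integer for every $j$, since $P$ is a plurality matrix.

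Next I would argue by contradiction. Suppose $C_m$ is a beta(k) winner but not a plurality winner. Then some candidate $C_l$ has $p_l > p_m$, and since plurality scores are integers this means $p_l \ge p_m + 1$. Using the identity above together with $r_l \ge 0$, $r_m \le n$, and the hypothesis $k > n$:
\[
b_l \;=\; k\,p_l + r_l \;\ge\; k\,p_l \;\ge\; k(p_m+1) \;=\; k\,p_m + k \;>\; k\,p_m + n \;\ge\; k\,p_m + r_m \;=\; b_m .
\]
Hence $b_m$ is not the maximum entry of the beta(k) score $E(\mathbf{X},B)$, so $C_m$ is not a beta(k) winner, a contradiction. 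Therefore every beta(k) winner $C_m$ satisfies $p_m \ge p_j$ for all $j$, i.e. $C_m$ is a plurality winner, which is the claimed inclusion.

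I do not expect a serious obstacle here; the statement is essentially an estimate. The only points requiring care are establishing the decomposition $b_j = k p_j + r_j$ correctly from Definition 2.4 and the first hypothesis (so that the count of $k$-entries in column $j$ is precisely $p_j$), keeping the remainder bound $r_j \le n$, and invoking integrality of plurality scores to pass from $p_l > p_m$ to $p_l \ge p_m + 1$. (By Lemma 3.1 the set of beta(k) winners is nonempty, so the inclusion is not vacuous, though the argument above would establish it in any case.)
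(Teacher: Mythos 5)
Your proposal is correct and follows essentially the same route as the paper's own proof: both bound the loser's beta(k) score by $k p_l + (n - p_l)$ (your $b_j = k p_j + r_j$ with $r_j \le n - p_j$ is the same decomposition made explicit) and both use integrality of plurality scores together with $k > n$ to force the contradiction. No issues.
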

\begin{proof} 
Suppose, for the sake of contradiction, there exists a candidate $C_l$ who is a beta(k) winner and not a plurality winner. Let $C_w$ be any candidate in the set of plurality winners. Since $k > n$, we shall say $k = n + \epsilon$ where $\epsilon >0$. \bigskip

Because $C_w$ is a plurality winner and $C_l$ is not, $p_w \geq p_l + 1$. Because $C_l$ is a beta(k) winner, $b_l \geq b_w$ and $b_l \leq p_l \cdot k + (n-p_l)$. \bigskip

Then
$$b_l \leq (p_l)(n + \epsilon) + (n - p_l) < (p_l + 1)(n + \epsilon) \leq p_w(n+\epsilon) \leq b_w.$$

This means $C_l$ is not a beta(k) winner. Therefore, we have arrived at a contradiction.
\end{proof}
\begin{corollary}
If
\begin{itemize}
\item $B_{i,j}=k$ if and only if $P_{i,j}=1$,
\item there is exactly one plurality winner, and
\item $k > n$, 
\end{itemize}
then $C_w$ is a plurality winner if and only if $C_w$ is a beta(k) winner.\\
\end{corollary}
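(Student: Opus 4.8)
The plan is to derive this almost immediately from Theorem 3.2 together with Lemma 3.1: the extra hypothesis that there is exactly one plurality winner collapses the set inclusion from Theorem 3.2 into a set equality, and the biconditional falls out.

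First I would verify that the hypotheses of Theorem 3.2 are met. The condition ``$B_{i,j}=k$ if and only if $P_{i,j}=1$'' is assumed verbatim, and ``$k>n$'' is assumed verbatim, so Theorem 3.2 applies and tells us that the set of beta(k) winners is a subset of the set of plurality winners. By the second hypothesis the latter set is a singleton, which we write as $\{C_w\}$, where $C_w$ is the unique plurality winner. Hence the set of beta(k) winners is a subset of $\{C_w\}$.

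Next I would invoke Lemma 3.1 to conclude that the set of beta(k) winners is non-empty. A non-empty subset of a one-element set must equal that set, so the set of beta(k) winners is exactly $\{C_w\}$. This yields both directions simultaneously: if $C_w$ is the plurality winner then $C_w$ is the unique beta(k) winner; and conversely, if $C_w$ is a beta(k) winner then $C_w \in \{C_w\}$ is the plurality winner.

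I do not expect a real obstacle here — the argument is essentially the set-theoretic observation that a non-empty subset of a singleton is the singleton, once Theorem 3.2 and Lemma 3.1 are in hand. The only points requiring care are confirming that the symbol $C_w$ in the statement is legitimately identified with the unique element guaranteed by the second hypothesis, and stating that trivial set-theoretic step explicitly rather than leaving it implicit.
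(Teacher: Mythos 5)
Your argument is correct and matches the paper's own proof: both derive the biconditional by combining Theorem 3.2 (beta(k) winners form a subset of plurality winners) with Lemma 3.1 (the beta(k) winner set is non-empty), so a non-empty subset of the singleton $\{C_w\}$ must equal it. The only difference is presentational — you phrase the final step as an explicit set-theoretic observation, while the paper argues the two directions separately.
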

\begin{proof}
If $C_w$ is a beta(k) winner, then $C_w$ must also be the plurality winner as the set of beta(k) winners is a subset of the set of plurality winners. \bigskip

If $C_w$ is the plurality winner, then $C_w$ must also be a beta(k) winner because the set of beta(k) winners is a subset of the set of plurality winners and the set of beta(k) winners is non-empty.
\end{proof}

\begin{remark}
If
\begin{itemize}
\item $B_{i,j}=k$ if and only if $P_{i,j}=1$, and
\item $k > n$, 
\end{itemize}
then the set of plurality winners may not be a subset of the set of the beta(k) winners.
\end{remark}
\begin{proof}
Suppose, for the sake of contradiction, the set of plurality winners is always a subset of the set of the beta(k) winners. The following counterexample disproves this:

\[ B = \begin{bmatrix}
k & 1 \\
0 & k
\end{bmatrix} 
, P = \begin{bmatrix}
1 & 0 \\
0 & 1
\end{bmatrix}
\]

\end{proof}

\begin{lemma}

If
\begin{itemize}
    \item $A_{i,j}=1$ if $P_{i,j}=1$,
    \item $B = K(P,A)$, and
    \item $C_w$ is both a plurality winner and an approval winner,
\end{itemize}
then $C_w$ is a beta(k) winner for any $k \geq 1$. 
\end{lemma}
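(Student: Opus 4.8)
The plan is to work directly with the beta(k) score and exploit the linearity of the summation map $E(\textbf{X},\cdot)$ from Definition 2.9. First I would observe that the hypothesis ``$A_{i,j}=1$ if $P_{i,j}=1$'' is precisely what is needed to guarantee that $B=K(P,A)=P\cdot(k-1)+A$ is a genuine beta(k) matrix, so that the conclusion is even well-posed: in any row $i$, the unique column $j$ with $P_{i,j}=1$ also has $A_{i,j}=1$, so $B_{i,j}=(k-1)+1=k$, while every other entry of that row is inherited unchanged from $A$ and hence lies in $\{0,1\}$. Thus each row of $B$ is a beta(k) vote. Moreover, since $E(\textbf{X},\cdot)$ is just the sum of row vectors, it is linear, so $E(\textbf{X},B)=(k-1)\,E(\textbf{X},P)+E(\textbf{X},A)$; componentwise this reads $b_j=(k-1)p_j+a_j$ for every $j$.

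Next I would fix an arbitrary candidate $C_j$ and bound $b_w-b_j$. Writing $b_w-b_j=(k-1)(p_w-p_j)+(a_w-a_j)$, I can sign each of the three factors separately: $p_w\ge p_j$ because $C_w$ is a plurality winner, $a_w\ge a_j$ because $C_w$ is an approval winner, and $k-1\ge 0$ because $k\ge 1$. Hence $b_w-b_j\ge 0$. Since $C_j$ was arbitrary, $b_w$ equals the maximum entry of the beta(k) score, so by Definition 2.9 the candidate $C_w$ is a beta(k) winner, as claimed.

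I do not expect a genuine obstacle here; the argument reduces to a single nonnegativity estimate. The only step needing a moment of care is the bookkeeping in the first paragraph — verifying that $K(P,A)$ really does output a valid beta(k) matrix, including the degenerate case $k=1$ (where Definition 2.4 still admits entries in $\{0,1\}$), since otherwise the phrase ``beta(k) winner'' would not apply. Once that is dispatched, the linearity of $E(\textbf{X},\cdot)$ and the two winner hypotheses finish the proof immediately.
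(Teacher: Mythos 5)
Your proof is correct, and it rests on the same underlying identity as the paper's: both expand the beta(k) score as $b_j=(k-1)p_j+a_j$ and compare $C_w$ against an arbitrary rival. The difference is in how the comparison is finished. The paper rearranges the inequality $kp_w+(a_w-p_w)\geq kp_l+(a_l-p_l)$ into $k\geq 1+\frac{a_l-a_w}{p_w-p_l}$ and then argues that the fraction is nonpositive because ``$a_l-a_w$ is negative and $p_w-p_l$ is positive'' --- but the hypotheses only give $p_w\geq p_l$ and $a_w\geq a_l$, so that division is undefined when $p_w=p_l$ and the claimed strict signs are unwarranted. You instead write $b_w-b_j=(k-1)(p_w-p_j)+(a_w-a_j)$ and observe that it is a sum of products of nonnegative quantities, which covers the tied cases without any case split and needs no division. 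You also verify that $K(P,A)$ actually produces a valid beta(k) matrix under the hypothesis $A_{i,j}=1$ whenever $P_{i,j}=1$, a well-posedness point the paper leaves implicit. So your route is essentially the paper's computation done more carefully, and it is the preferable write-up.
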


\begin{proof} Without loss of generality, let $C_l$ be any other candidate in $\textbf{X}$. $p_w \geq p_l$ and $a_w \geq a_l$.\\ 
\bigskip
Thus, A's score under beta(k) is $kp_w + (a_w - p_w)$, while B's score under beta(k) is $kp_l + (a_l - p_l)$.\\
\medskip
A wins beta(k) if $$kp_w + (a_w - p_w) \geq kp_l + (a_l - p_l).$$
This is equivalent to the condition $$k\geq \frac{a_l-a_w+p_w-p_l}{p_w-p_l} = 1 + \frac{a_l-a_w}{p_w-p_l}.$$
Because $a_l - a_w$ is negative and $p_w - p_l$ is positive, $W$ wins for any $k \geq 1$.
\end{proof}

\begin{corollary} Under the same assumptions, suppose $\textbf{X} = \{C_1,C_2\}$, candidate $C_1$ is a plurality winner, and $C_2$ is an approval winner.\\
\bigskip
$C_1$ is the beta(k) winner if $$k > 1 + \frac{a_2-a_1}{p_1-p_2}.$$\\  $C_2$ is the beta(k) winner if $$k < 1 + \frac{a_2-a_1}{p_1-p_2}.$$\\ $C_1$ and $C_2$ are both beta(k) winners if $$k = 1 + \frac{a_2-a_1}{p_1-p_2}.$$ 
\end{corollary}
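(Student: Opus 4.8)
The plan is to reduce the whole statement to a single linear inequality in $k$, exactly as in the proof of the preceding Lemma, and then read off the three cases by comparing the two beta(k) scores directly. First I would record what the hypotheses give in the two-candidate setting: since $C_1$ is a plurality winner, $p_1 \geq p_2$, and since $C_2$ is an approval winner, $a_2 \geq a_1$. For the displayed fractions to be meaningful we need $p_1 \neq p_2$, so I would note at the outset that the corollary is implicitly assuming $C_1$ to be the \emph{unique} plurality winner; together with $p_1 \geq p_2$ this gives $p_1 > p_2$. This interpretive point is the only genuinely delicate step — everything after it is routine one-variable algebra.

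Next, using $B = K(P,A) = P\cdot(k-1) + A$ from Definition 2.8 and summing the rows, the beta(k) score of $C_j$ is $b_j = (k-1)p_j + a_j = kp_j + (a_j - p_j)$, just as in the Lemma. Since there are only two candidates, Definition 2.9 tells us that $C_1$ is the sole beta(k) winner precisely when $b_1 > b_2$, that $C_2$ is the sole beta(k) winner precisely when $b_2 > b_1$, and that $C_1$ and $C_2$ are both beta(k) winners precisely when $b_1 = b_2$; these three cases are mutually exclusive and exhaustive, so it suffices to translate each of them.

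Finally I would manipulate $b_1 > b_2$: it is equivalent to $(k-1)(p_1-p_2) > a_2 - a_1$, and dividing by $p_1 - p_2 > 0$ (which preserves the direction of the inequality) gives $k > 1 + \frac{a_2-a_1}{p_1-p_2}$. Running the identical chain of equivalences with $>$ replaced by $<$ and by $=$ yields the stated thresholds for $C_2$ to be the sole winner and for a tie; in particular each implication in the corollary is really an equivalence. As a consistency check, the hypothesis $a_2 \geq a_1$ forces the threshold $1 + \frac{a_2-a_1}{p_1-p_2}$ to be at least $1$, so both the interval on which $C_2$ wins and the tie value lie in the admissible region $k \geq 1$; and when $a_1 = a_2$ the threshold collapses to $1$, recovering the Lemma's conclusion that a candidate who wins both plurality and approval is a beta(k) winner for every $k \geq 1$. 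The main obstacle, such as it is, is purely the division-by-zero / uniqueness subtlety flagged in the first paragraph.
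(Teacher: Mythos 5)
Your proposal is correct and follows exactly the route the paper intends: the paper states this corollary without a separate proof because it is the specialization of the algebra in the preceding Lemma's proof (writing $b_j = kp_j + (a_j - p_j)$ and solving the comparison $b_1 \gtrless b_2$ for $k$), which is precisely what you do. Your explicit handling of the $p_1 = p_2$ degeneracy and the sanity check that the threshold lies in $[1,\infty)$ are small additions the paper leaves implicit, but they do not change the argument.
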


\begin{corollary}
For any set of candidates $\textbf{X} = \{C_1,C_2,\ldots,C_c\}$, if $C_w$ has a higher score than $C_l$ under both approval and plurality votes, then $C_l$ cannot have the highest score under beta(k) for any $k \geq 1$.\\
\end{corollary}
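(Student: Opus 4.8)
The plan is to obtain this as the ``pointwise'' strengthening of Lemma 3.3: there $C_w$ was assumed to beat \emph{every} candidate in both the plurality and approval scores, whereas here we only assume $C_w$ beats the single candidate $C_l$, but strictly. I would carry over the standing hypotheses of that lemma, namely $A_{i,j}=1$ whenever $P_{i,j}=1$ (which is what makes $B=K(P,A)$ an honest beta(k) matrix, so that ``highest score under beta(k)'' is even meaningful) together with $B=K(P,A)$. Reading ``has a higher score'' as strictly higher, the hypothesis says $p_w>p_l$ and $a_w>a_l$; since plurality and approval scores are sums of $0/1$ entries these gaps are in fact integers, so $p_w \geq p_l+1$ and $a_w \geq a_l+1$, though only the weaker strict inequalities are needed.

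Next I would write the beta(k) scores explicitly. By Definition 2.8, $B=K(P,A)=P\cdot(k-1)+A$, and summing over the rows as in Definition 2.9 gives $b_j=(k-1)p_j+a_j$ for each candidate $C_j$. Subtracting the two relevant entries,
$$b_w-b_l=(k-1)(p_w-p_l)+(a_w-a_l).$$
Since $k\geq 1$ the factor $k-1$ is non-negative and $p_w-p_l>0$, so the first summand is $\geq 0$; the second summand $a_w-a_l$ is strictly positive. Hence $b_w-b_l>0$, i.e.\ $b_w>b_l$. Therefore $b_l$ is not the maximum entry of the beta(k) score, so by Definition 2.9 $C_l$ is not a beta(k) winner for any $k\geq 1$, which is the claim. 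Equivalently, one can invoke the inequality derived inside the proof of Lemma 3.3: $C_w$ outscores $C_l$ under beta(k) exactly when $k\geq 1+\frac{a_l-a_w}{p_w-p_l}$, and since $a_l-a_w<0<p_w-p_l$ the right-hand side is strictly below $1$, so the inequality holds for every $k\geq 1$.

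I do not expect a genuine obstacle, since the computation is the same one already used for Lemma 3.3. The only points requiring care are: (i) interpreting ``higher'' as strict, since otherwise (with equal scores under both systems and $k=1$) $C_l$ could tie $C_w$ and the conclusion would fail; and (ii) checking the endpoint $k=1$, where the plurality advantage $(k-1)(p_w-p_l)$ vanishes but the strictly positive approval gap $a_w-a_l$ still forces $b_w>b_l$. It is also worth stating explicitly that to rule $C_l$ out as a beta(k) winner it suffices to exhibit a single candidate, here $C_w$, with a strictly larger beta(k) score---there is no need to compare $C_l$ against the whole field.
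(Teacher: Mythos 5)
Your proposal is correct and follows essentially the same route as the paper: the paper's proof simply cites the computation inside Lemma 3.3, namely that $b_w - b_l = (k-1)(p_w-p_l)+(a_w-a_l) > 0$ whenever $p_w > p_l$, $a_w > a_l$, and $k \geq 1$. Your version is in fact slightly more careful than the paper's one-line argument (which says ``for all $k>1$'' where $k \geq 1$ is both needed and true), since you handle the endpoint $k=1$ explicitly and note that exhibiting the single candidate $C_w$ suffices to disqualify $C_l$.
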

\begin{proof}
Assume a candidate $C_w$ has a higher score under plurality and approval votes than $C_l$. As shown in the lemma, this implies that $C_w$ has a higher beta(k) score than $C_l$ for all $k>1$, so $C_l$ cannot be a beta(k) winner.\\
\end{proof}

\begin{theorem}
Given a set of candidates $\{C_1, C_2,\ldots,C_c\}$, let \textbf{Y} denote the subset of candidates $\{C_1,C_2,\ldots,C_r\}$ (WLOG) that can win a beta(k) election given a value of $k \geq 1$ where $n$ is the number of voters. Suppose (again, WLOG) $p_1<p_2<\ldots<p_r$. Then $a_r<a_{r-1}<\ldots<a_1$. \bigskip

Conversely, if (again, WLOG) $a_1<a_2<\ldots<a_r$, then $p_r<p_{r-1}<\ldots<p_1$.
\end{theorem}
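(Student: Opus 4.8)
The plan is to run everything through the corollary immediately preceding this theorem --- the one stating that if $C_w$ beats $C_l$ in \emph{both} the plurality score and the approval score then $C_l$ cannot be a beta(k) winner for any $k \ge 1$ --- while working under the same standing hypotheses used there, namely $B = K(P,A)$ and $A_{i,j} = 1$ whenever $P_{i,j} = 1$. I read $Y$ as the set of candidates each of which is a beta(k) winner for at least one admissible value of $k$; that corollary then says precisely that no two members of $Y$ are comparable in the coordinatewise order on the pairs of scores $(p_m, a_m)$.

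First I would fix two members $C_i, C_j \in Y$ with $i < j$, so that $p_i < p_j$ by the (WLOG) ordering, and argue by contradiction: if $a_i < a_j$, then $C_j$ has strictly larger plurality \emph{and} approval scores than $C_i$, so the corollary applied with $C_w = C_j$ and $C_l = C_i$ gives $C_i \notin Y$, a contradiction. Hence $p_i < p_j$ forces $a_i \ge a_j$, which already delivers the weak chain $a_r \le a_{r-1} \le \dots \le a_1$.

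The delicate part is upgrading these to strict inequalities, i.e. ruling out $a_i = a_j$ for distinct members of $Y$. Here I would compute directly from $b_m = (k-1)p_m + a_m$ (Definition 2.8): for $i < j$ one has $b_j - b_i = (k-1)(p_j - p_i) + (a_j - a_i)$, so if $p_i < p_j$ and $a_i = a_j$ this difference is nonnegative for every $k \ge 1$ and vanishes only at $k = 1$; consequently $C_i$ can share the top beta(k) score only at $k = 1$, and only alongside $C_j$. An analogous computation shows that two members of $Y$ cannot share a plurality score unless they have identical beta(k) scores for all $k$. So the strict conclusion holds once one fixes a tie-breaking convention on degenerate score vectors --- equivalently, once the WLOG ordering is read as having already collapsed such indistinguishable candidates --- and I would state this assumption explicitly. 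I expect this bookkeeping about ties, rather than the main inequality, to be the only genuine obstacle.

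Finally, the converse is the same argument with the roles of the plurality and approval scores exchanged: if $a_1 < a_2 < \dots < a_r$ on $Y$ and $p_i < p_j$ held for some $i < j$, then $C_j$ would dominate $C_i$ in both coordinates and the corollary would eject $C_i$ from $Y$; hence $p_r < p_{r-1} < \dots < p_1$, with strictness handled exactly as before. No new ideas are required for this direction.
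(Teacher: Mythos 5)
Your argument is essentially the paper's own: both proofs reduce the statement to the preceding corollary (a candidate strictly dominated in both the plurality and approval scores cannot be a beta(k) winner for any $k\ge 1$), applied pairwise to members of $Y$, with the converse handled by symmetry. The one place you go beyond the paper is the tie case $a_i=a_j$ with $p_i<p_j$: the paper's proof silently skips it, since its dichotomy only covers the two strictly-comparable configurations, and your concern is legitimate --- at $k=1$ one has $b_j-b_i=(k-1)(p_j-p_i)+(a_j-a_i)=0$, so such a pair can both be beta(1) winners (e.g.\ two voters who each approve both candidates and both name $C_2$ first give $a_1=a_2=2$, $p_1=0<p_2=2$, and $Y=\{C_1,C_2\}$), which means the strict chain $a_r<\dots<a_1$ genuinely requires the extra convention you impose rather than being automatic. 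So your write-up is the same proof, but it correctly identifies and patches an edge case the paper overlooks.
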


\begin{proof} Suppose there exists a different ordering of the approval scores. Then for some pair $C_i, C_j (i \neq j)$, $p_i > p_j$ and $a_i > a_j$ or $p_i < p_j$ and $a_i < a_j$. Then, $C_i$ or $C_j$ would not be a potential beta(k) winner, hence not in $\textbf{Y}$.\\
\medskip
To clarify, let $p_1<p_2<\ldots<p_r$. If the approval ordering is not the exact reverse of the plurality ordering i.e. there exists $C_i, C_j$ such that $p_i < p_j$ and $a_i < a_j$. Then $C_j$ is not a potential winner by Corollary 3.3.2. \bigskip

By a similar argument, it is not hard to prove the other direction.
\end{proof}

\begin{lemma}
Suppose as in Theorem 3.4 that \textbf{Y} is the subset $\{C_1,\ldots,C_r\}$ that can win a beta(k) election (i.e. for each of those candidates there exists some $k \geq 1$ such that the given candidate is in the set of winning candidates) and suppose that $r>2$. Suppose that $p_1 < p_2 <\ldots<p_r$ (hence $a_1>\ldots>a_r$). Then the following series of inequalities holds:

$$\frac{a_1-a_2}{p_2-p_1}\leq\frac{a_2-a_3}{p_3-p_2}\leq\ldots\leq\frac{a_{r-1}-a_r}{p_r-p_{r-1}}.$$

\begin{proof}
Suppose that \textbf{Y} is the subset $\{C_1,...,C_r\}$ that can win a beta(k) election and $r>2$. We will proceed to prove this series of inequalities by induction. 
\bigskip

Consider the base case. Since $C_2$ is a potential winner, that implies that there exists a $k$ such that $b_2 \geq b_1$ and $b_2 \geq b_3$. Thus we get the system of inequalities
\bigskip

\[
    \begin{cases}
    k\cdot p_2 + a_2 - p_2 \geq k\cdot p_1 + a_1 - p_1 \\ 
    k\cdot p_2 + a_2 - p_2 \geq k\cdot p_3 + a_3 - p_3, \\
    \end{cases}
\]
\bigskip

which is equivalent to
\bigskip

$$1+\frac{a_1 - a_2}{p_2-p_1}\leq k \leq 1+ \frac{a_2 - a_3}{p_3-p_2}.$$
Thus, we must have
$$\frac{a_1 - a_2}{p_2-p_1} \leq \frac{a_2 - a_3}{p_3-p_2}.$$
Otherwise there would not exist any $k$ for which candidate $C_2$ could win, contradicting the initial assumption.

\bigskip

Now choose $w$ such that $w < r$ and suppose that
$$\frac{a_1-a_2}{p_2-p_1}\leq\ldots\leq\frac{a_{w-1}-a_w}{p_w-p_{w-1}}.$$
Since we assume that $C_w$ is a potential winner, there exists a $k$ such that $b_w>b_{w-1}$ and $b_w>b_{w+1}$. This gives us the system of inequalities
\bigskip

\[
    \begin{cases}
    k\cdot p_w + a_w - p_w \geq k\cdot p_{w-1} + a_{w-1} - p_{w-1} \\ 
    k\cdot p_w + a_w - p_w \geq k\cdot p_{w+1} + a_{w+1} - p_{w+1}, \\
    \end{cases}
\]
\bigskip

which is equivalent to
$$1+\frac{a_{w-1} - a_w}{p_w-p_{w-1}}\leq k \leq 1+ \frac{a_w - a_{w+1}}{p_{w+1}-p_w}.$$
Thus, a $k$ such that $C_w$ is a potential winner exists only if
$$\frac{a_{w-1} - a_w}{p_w-p_{w-1}} \leq \frac{a_w - a_{w+1}}{p_{w+1}-p_w}.$$

This completes the proof by induction.
\end{proof}
\end{lemma}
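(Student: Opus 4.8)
The plan is to reduce everything to the single observation that a candidate $C_w$ lies in $\textbf{Y}$ precisely when there is \emph{one} value $k \ge 1$ at which $b_w \ge b_j$ for every $j$, and then to use only the two constraints coming from $C_w$'s immediate neighbours $C_{w-1}$ and $C_{w+1}$ in the plurality ordering. So first I would fix an index $w$ with $2 \le w \le r-1$ and let $k_w \ge 1$ be a witness for $C_w \in \textbf{Y}$, i.e. a value at which $C_w$ is a beta(k) winner; in particular $b_w \ge b_{w-1}$ and $b_w \ge b_{w+1}$ both hold at $k = k_w$ simultaneously.

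Next I would write $b_j = k\,p_j + (a_j - p_j)$ and unpack the two inequalities at $k_w$. The inequality $b_w \ge b_{w-1}$ rearranges to $k_w(p_w - p_{w-1}) \ge (a_{w-1}-a_w) + (p_w - p_{w-1})$; since $p_w - p_{w-1} > 0$ this gives $k_w \ge 1 + \frac{a_{w-1}-a_w}{p_w-p_{w-1}}$. The inequality $b_w \ge b_{w+1}$ rearranges the same way, except that the coefficient $p_w - p_{w+1}$ is now \emph{negative}, so dividing reverses the direction and yields $k_w \le 1 + \frac{a_w - a_{w+1}}{p_{w+1}-p_w}$. Combining these two bounds on the common value $k_w$ forces
\[
\frac{a_{w-1}-a_w}{p_w-p_{w-1}} \;\le\; \frac{a_w - a_{w+1}}{p_{w+1}-p_w}.
\]

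Finally, since this runs verbatim for every $w \in \{2,\dots,r-1\}$, and the two sides of the displayed inequality are exactly the $(w-1)$-st and $w$-th terms of the asserted chain, stringing the $r-2$ resulting inequalities together gives the full result. (One could dress this up as an induction on $w$, since the base case $w=2$ and the inductive step are literally the same inequality, but the direct version makes plain that nothing beyond "$C_w$ beats its two neighbours at a common $k$" is needed.) The only places that demand care are the sign flip when dividing by $p_w - p_{w+1} < 0$, and checking that every fraction in sight is well defined: by the standing hypothesis $a_1 > \dots > a_r$ together with $p_1 < \dots < p_r$ (Theorem 3.4), each numerator $a_{w-1}-a_w$ and each denominator $p_w - p_{w-1}$ is strictly positive, so no division by zero occurs.
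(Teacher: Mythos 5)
Your proposal is correct and uses exactly the same core argument as the paper: a witness $k$ at which $C_w$ beats both neighbours $C_{w-1}$ and $C_{w+1}$ yields a lower and an upper bound on that common $k$, forcing the two consecutive slopes to compare. The only difference is cosmetic --- the paper wraps this in an induction whose hypothesis is never actually used, and you rightly observe that the direct ``for each $w$'' version suffices; your explicit attention to the sign flip and to well-definedness of the denominators is a small improvement in rigor over the paper's write-up.
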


\begin{theorem}
Suppose as in the above Lemma we have potential candidates $\{C_1,\ldots,C_r\}$ with $r>2$. Then candidate $C_w$ is a beta(k) winner if and only if
$$1+\frac{a_{w-1} - a_w}{p_w-p_{w-1}}\leq k \leq 1+ \frac{a_w - a_{w+1}}{p_{w+1}-p_w}.$$

If $w=1$ or $w=r$, then the left or the right inequality (respectively) should be omitted.
\begin{proof}
($\Rightarrow$)

For candidate $C_w$ to be a beta(k) winner, the equation
$$k\cdot p_w + a_w - p_w \geq
k\cdot p_j + a_j - p_j$$

must be satisfied for all $j \neq w$.

To establish the lower bound for $k$, consider all integers $l$ such that $1\leq l < w$. We must have
$$k\cdot p_w + a_w - p_w \geq
k\cdot p_l + a_l - p_l.$$
Equivalently,
$$k \geq 1 + max\{\frac{a_l - a_w}{p_w - p_l}    |   1\leq l < w \}.$$ 

Notice however, that by the previous Lemma, $\frac{a_l - a_w}{p_w - p_l} \leq \frac{a_{w-1} - a_w}{p_w - p_{w-1}}$ for all such $l$, thus
$$k \geq 1 + \frac{a_{w-1} - a_w}{p_w - p_{w-1}}.$$

To establish the upper bound, we use a similar calculation but for $g$ where $w<g \leq r$. For $C_w$ to win, we need $k$ so that

$$k \leq 1 + min\{\frac{a_w - a_g}{p_g - p_w}    |   w < g \leq r \}.$$

Again, the previous Lemma implies that $\frac{a_g - a_w}{p_w - p_g} \geq \frac{a_{w+1} - a_w}{p_w - p_{w+1}}$ for all such $g$, thus
$$k \leq 1 + \frac{a_{w} - a_{w+1}}{p_{w+1} - p_w}.$$

Therefore, $C_w$ is a beta(k) winner whenever
$$1+\frac{a_{w-1} - a_w}{p_w-p_{w-1}}\leq k \leq 1+ \frac{a_w - a_{w+1}}{p_{w+1}-p_w}.$$
Of course, if $w=1$ then the lower bound for $k$ is just $1$ and if $w=r$ then there is no upper bound for $k$.

($\Leftarrow$)

Suppose, for the sake of contradiction, that $C_w$ is a beta(k) winner and there exist $l, w \in \mathbb{N}$ such that $1 \leq l < w$ and $1 + \frac{a_w - a_l}{p_l - p_w} > k$. \bigskip

If $1 + \frac{a_w - a_l}{p_l - p_w} > k$, then $k \cdot p_w - p_w + a_w  < k \cdot p_l  - p_l + a_l$. \bigskip

However, this contradicts $C_w$ being a beta(k) winner.\\ \bigskip

Similarly, suppose that $C_w$ is a beta(k) winner and there exist $w, g \in \mathbb{N}$ such that $w < g \leq r$ and $k > 1 + \frac{a_g - a_w}{p_w - p_g}$. \bigskip

If $k > 1 + \frac{a_g - a_w}{p_w - p_g}$, then $k \cdot p_w - p_w + a_w  < k \cdot p_g  - p_g + a_g$. \bigskip

Again, this contradicts $C_w$ being a beta(k) winner.

\end{proof}

\end{theorem}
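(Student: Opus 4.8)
The plan is to prove the two implications separately: the forward direction is just rearranging the two inequalities $b_w\ge b_{w-1}$ and $b_w\ge b_{w+1}$, while the reverse direction uses the monotone chain of ratios from the preceding Lemma together with one small remark to handle candidates that do not lie in $\mathbf{Y}$.

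For ($\Rightarrow$), suppose $C_w$ is a beta(k) winner. I would only invoke two of the defining inequalities, $b_w\ge b_{w-1}$ and $b_w\ge b_{w+1}$ (meaningful when $1<w<r$). Writing $b_j=(k-1)p_j+a_j$, dividing the first inequality by $p_w-p_{w-1}>0$ gives $k\ge 1+\frac{a_{w-1}-a_w}{p_w-p_{w-1}}$, and dividing the second by $p_{w+1}-p_w>0$ gives $k\le 1+\frac{a_w-a_{w+1}}{p_{w+1}-p_w}$. If $w=1$ the first comparison does not exist and only the standing hypothesis $k\ge 1$ remains; if $w=r$ the second does not exist. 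This step is purely mechanical.

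For ($\Leftarrow$), assume $k$ lies in the stated interval. I would first show $b_w\ge b_j$ for \emph{every} $C_j\in\mathbf{Y}$, not only for $j=w\pm 1$. For $j<w$ the target inequality $b_w\ge b_j$ rearranges (dividing by $p_w-p_j>0$) to $k\ge 1+\frac{a_j-a_w}{p_w-p_j}$; the key point is that $\frac{a_j-a_w}{p_w-p_j}$ is a weighted average, with positive weights $p_{i+1}-p_i$, of the consecutive ratios $\frac{a_i-a_{i+1}}{p_{i+1}-p_i}$, $i=j,\dots,w-1$, so by the Lemma's monotone chain it is at most the last of these, $\frac{a_{w-1}-a_w}{p_w-p_{w-1}}$, which is $\le k-1$ by hypothesis. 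Symmetrically, for $j>w$ the inequality $b_w\ge b_j$ rearranges to $k\le 1+\frac{a_w-a_j}{p_j-p_w}$, and $\frac{a_w-a_j}{p_j-p_w}$ is a weighted average of $\frac{a_i-a_{i+1}}{p_{i+1}-p_i}$, $i=w,\dots,j-1$, hence at least the first of these, $\frac{a_w-a_{w+1}}{p_{w+1}-p_w}\ge k-1$. The endpoint cases $w=1,r$ are the same with one side dropped. To conclude that $C_w$ is actually a beta(k) winner (i.e. beats \emph{all} candidates, including any outside $\mathbf{Y}$), I would use that $\mathbf{Y}$ is by definition the set of candidates winning for some value of $k$, so every beta(k) winner at any value of $k$ lies in $\mathbf{Y}$: taking $C_t$ to be a beta(k) winner at the current $k$, we have $C_t\in\mathbf{Y}$, hence $b_w\ge b_t$ by the above, while $b_t\ge b_w$ since $C_t$ is a winner; thus $b_w$ is the maximal score and $C_w$ is a winner.

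The only delicate point is the weighted-average step — recognizing $\frac{a_j-a_w}{p_w-p_j}$ and $\frac{a_w-a_j}{p_j-p_w}$ as convex combinations of the consecutive ratios, and then applying the Lemma's chain in the correct direction: bounding above by the \emph{last} ratio when $j<w$, and below by the \emph{first} when $j>w$. The forward-direction algebra and the ``every winner lies in $\mathbf{Y}$'' remark are both immediate.
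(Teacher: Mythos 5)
Your proposal is correct, and it follows the same basic strategy as the paper (rearrange the score inequalities $b_w\ge b_j$ into bounds on $k$, then use the Lemma's monotone chain of consecutive ratios to reduce everything to the two adjacent candidates $C_{w\pm1}$). But it is more complete than the paper's proof in two respects worth noting. First, the paper asserts that the Lemma directly gives $\frac{a_l-a_w}{p_w-p_l}\le\frac{a_{w-1}-a_w}{p_w-p_{w-1}}$ for every $l<w$, when the Lemma only orders the \emph{consecutive} ratios; your explicit observation that the skip ratio $\frac{a_l-a_w}{p_w-p_l}$ is a convex combination of the consecutive ratios with weights $p_{i+1}-p_i$ is exactly the missing justification, and you apply it in the correct direction on each side. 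Second, the paper's ``($\Leftarrow$)'' paragraph actually assumes $C_w$ is a winner and derives a contradiction from $k$ lying outside the interval --- that is just the contrapositive of the forward direction restated, so the paper never genuinely proves that $k$ in the interval forces $C_w$ to win. Your reverse direction does prove this: you establish $b_w\ge b_j$ for all $C_j\in\mathbf{Y}$, and then handle candidates outside $\mathbf{Y}$ by noting that some winner $C_t$ exists, must lie in $\mathbf{Y}$ by definition, and satisfies $b_t\ge b_j$ for all $j$, so $b_w\ge b_t$ makes $C_w$ a winner too. That last remark is needed (the interval conditions alone say nothing about candidates outside $\mathbf{Y}$) and the paper omits it entirely, so your version should be preferred.
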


\begin{theorem} 
If
\begin{itemize}
    \item $A_{i,j}=1$ if and only if $B_{i,j} \neq 0$, and
    \item $1 \leq k < 1+ \frac{1}{n}$,
\end{itemize} 

 then the set of beta(k) winners is a subset of the set of approval winners.
\end{theorem}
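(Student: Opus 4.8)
The plan is to reduce the statement to the score identity $b_j=(k-1)p_j+a_j$ for every candidate $C_j$, and then observe that when $k$ is this close to $1$ the weight $(k-1)p_j$ is too small to let a candidate with a strictly smaller approval score overtake a would-be approval winner.

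First I would recover the plurality structure hidden inside $B$. Since $B$ is a beta(k) matrix, each row has exactly one entry equal to $k$; set $P_{i,j}=1$ precisely when $B_{i,j}=k$ and $P_{i,j}=0$ otherwise, so that $P$ is a genuine plurality matrix. Using the hypothesis that $A_{i,j}=1$ exactly when $B_{i,j}\neq 0$, one checks entrywise that $B_{i,j}=(k-1)P_{i,j}+A_{i,j}$: if $B_{i,j}=k$ both sides equal $k$; if $B_{i,j}=1$ both sides equal $1$; if $B_{i,j}=0$ both sides equal $0$. Hence $B=K(P,A)$, and summing over rows gives $b_j=(k-1)p_j+a_j$. (Equivalently, count directly: $a_j$ is the number of rows $i$ with $B_{i,j}\neq 0$ and $p_j$ the number with $B_{i,j}=k$, so $b_j=kp_j+(a_j-p_j)$.)

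Next I would argue by contradiction. Suppose $C_w$ is a beta(k) winner but not an approval winner; then there is a candidate $C_m$ with $a_m\geq a_w+1$, since approval scores are integers. Because $C_w$ is a beta(k) winner, $b_w\geq b_m$, i.e. $(k-1)p_w+a_w\geq (k-1)p_m+a_m$, which rearranges to $a_m-a_w\leq (k-1)(p_w-p_m)$. Now $p_w\leq n$ and $p_m\geq 0$, so $p_w-p_m\leq n$; together with $k-1\geq 0$ and $k-1<\tfrac1n$ this yields $(k-1)(p_w-p_m)\leq (k-1)n<1$. Therefore $a_m-a_w<1$, contradicting $a_m-a_w\geq 1$. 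Hence no such $C_m$ exists and $C_w$ is an approval winner.

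I do not expect a serious obstacle. The only points needing care are the integrality of the approval scores (so that $a_m>a_w$ forces $a_m-a_w\geq 1$), the bound $p_w-p_m\leq n$ including the degenerate case $p_w-p_m<0$ (where $(k-1)(p_w-p_m)\leq 0<1$ and the same contradiction applies), and the standing assumption $n\geq 1$ so that $\tfrac1n$ is meaningful. This theorem is the approval-side analogue of Theorem 3.1, and the argument mirrors that one with the roles of plurality and approval interchanged.
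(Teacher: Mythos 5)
Your proof is correct and follows essentially the same short contradiction argument as the paper's: both exploit the fact that $k-1<\tfrac{1}{n}$ makes the first-choice bonus too small to close an integer gap of at least $1$ in the approval scores. The only cosmetic difference is that you work from the exact decomposition $b_j=(k-1)p_j+a_j$ and the bound $p_w\leq n$, whereas the paper bounds $b_l\leq k\,a_l$ and $b_w\geq a_w$ directly and uses $a_l\leq n$; both routes are valid.
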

\begin{proof}
Suppose, for the sake of contradiction, there exists a candidate $C_l$ who is a beta(k) winner and not an approval winner. If $C_w$ is any approval winner, then $a_w \geq a_l + 1$ and $b_w \geq a_w$, but $b_w \leq b_l \leq a_l \cdot (k) < a_l(1+\frac{1}{n}) \leq a_w$. This is a contradiction.
\end{proof}

\begin{corollary}
If
\begin{itemize}
    \item $A_{i,j}=1$ if and only if $B_{i,j} \neq 0$,
    \item there exists exactly one approval winner, and
    \item $1 \leq k < 1+ \frac{1}{n}$,
\end{itemize} 

 then $C_w$ is a beta(k) winner if and only if $C_w$ is an approval winner.
\end{corollary}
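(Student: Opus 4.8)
The plan is to mirror the argument used for Corollary 3.1.1, but with Theorem 3.6 playing the role that Theorem 3.1 played there. Note first that the hypotheses of this corollary include both hypotheses of Theorem 3.6 (the biconditional $A_{i,j}=1 \iff B_{i,j}\neq 0$ and the bound $1 \leq k < 1 + \tfrac{1}{n}$), so Theorem 3.6 applies directly and tells us that the set of beta(k) winners is contained in the set of approval winners.

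For the forward direction, suppose $C_w$ is a beta(k) winner. Then by Theorem 3.6 it lies in the set of approval winners, i.e.\ $C_w$ is an approval winner; this direction does not even use the uniqueness hypothesis.

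For the converse, suppose $C_w$ is an approval winner. By the second hypothesis, $C_w$ is the \emph{only} approval winner, so the set of approval winners is exactly $\{C_w\}$. By Lemma 3.1 the set of beta(k) winners is non-empty, and by Theorem 3.6 it is a subset of $\{C_w\}$; the only non-empty subset of a singleton is the singleton itself, so the set of beta(k) winners equals $\{C_w\}$, and hence $C_w$ is a beta(k) winner.

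There is really no substantive obstacle here: the entire content has been pushed into Theorem 3.6 and Lemma 3.1. The only point worth stating carefully is the role of the uniqueness assumption — it is needed precisely to force the (a priori possibly smaller) set of beta(k) winners to coincide with the full approval-winner set rather than merely be contained in it.
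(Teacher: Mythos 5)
Your proposal is correct and follows essentially the same route as the paper's own proof: the forward direction is immediate from the subset containment of Theorem 3.6, and the converse combines that containment with the non-emptiness of the beta(k) winner set and the uniqueness of the approval winner. Your version is slightly more explicit in citing Lemma 3.1 for non-emptiness and in spelling out the singleton argument, but there is no substantive difference.
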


\begin{proof}
If $C_w$ is a beta(k) winner, then $C_w$ must also be the approval winner as the set of beta(k) winners is a subset of the set of approval winners. \bigskip

If $C_w$ is the approval winner, then $C_w$ must also be a beta(k) winner because the set of beta(k) winners is a subset of the set of approval winners and the set of beta(k) winners is non-empty.
\end{proof}

\begin{remark}
If
\begin{itemize}
    \item $A_{i,j}=1$ if and only if $B_{i,j} \neq 0$, and
    \item $1 \leq k < 1+ \frac{1}{n}$,
\end{itemize} 

 then the set of approval winners may not be a subset of the set of beta(k) winners.
\end{remark}
\begin{proof}
Suppose, for the sake of contradiction, the set of approval winners is always a subset of the set of the beta(k) winners. The following counterexample disproves this:
\[ B = \begin{bmatrix}
k & 1 \\
k & 1
\end{bmatrix} 
, A = \begin{bmatrix}
1 & 1 \\
1 & 1
\end{bmatrix}
\]
\end{proof}
\section{Voting Criteria}

\subsection{Non-dictatorship}
\textbf{Definition 4.1} A voting system is a \textbf{dictatorship} if there exists a vote whose maximum value always corresponds to a winning candidate under that system.\\
\bigskip

\textit{Observation: beta(k) with at least 3 votes, is a non-dictatorship.}

\begin{proof}
For the sake of contradiction, assume this is a dictatorship. This means there exists $n_j$ such that the maximum value in this vector always corresponds to the beta(k) winner.
WLOG, suppose the maximum value in $n_j$ corresponds to the $w^{th}$ candidate.
Suppose for all $n_p \neq n_j$, $k$ points are assigned to the $l^{th}$ candidate and $0$ points are assigned to the $w^{th}$ candidate. 

Then $b_l > b_w$ and we have arrived at a contradiction.
\end{proof}
\bigskip
\textit{Note: Since approval is equivalent to beta(1), this shows that approval is also a non-dictatorship. Under this construction, multiple winners cannot be selected for a fixed set of plurality votes. Furthermore, since plurality winners agree with beta(k) winners whenever $k>n$, this shows that plurality is also a non-dictatorship (since $n>2$ and all but one of the voters vote for $n_j$ not $n_i$, the plurality winner is unique).}

\subsection{Monotonicity}
\textbf{Definition 4.2} A voting system is \textbf{monotonic} if increasing a value in a vote corresponding to a winning candidate cannot make that candidate lose. Similarly, in a monotonic system, decreasing the value in a vote for a losing candidate cannot make that candidate win. 
\bigskip

\textit{Observation: beta(k) is a monotonic voting system.} 

\begin{proof}
WLOG, assume that beta(k) selects the $w^{th}$ candidate. Then $b_w > b_l$ for all $l \neq w$. Suppose that the $n_j$ vote vector assigns either $0$ or $1$ to its $w^{th}$ element. If this element is increased, then $b_w^{new} > b_w > b_l \geq b_l^{new}$ for all $l \neq w$ where $b_i^{new}$ denotes the new beta(k) total for candidate $i$ after the vote was increased. Similarly, if candidate $l$ does not win, then there exists $w$ such that $b_w > b_l$. Suppose that the $n_j$ vote vector assigns either $1$ or $k$ to its $l^{th}$ element. Then, if this element is decreased, we have $b_l^{new} < b_l< b_w \leq b_w^{new}$, so candidate $l$ still does not win the election.
\end{proof}

\textit{Note: Again, since the result above holds for all k, it holds for approval since beta(1) is equivalent to approval. This also holds for plurality since plurality winners agree with beta(k) winners whenever $k>n$ (assuming that there is not a plurality tie). However, it is trivial to show that even under a tie, the plurality system is still monotonic.}

\subsection{Unanimous winner}
\textbf{Definition 4.3} If there exists an element in $\textbf{X}$ that corresponds to the maximum value in every vote, this element is called a \textbf{unanimous winner}. 
\bigskip

\textit{Observation: beta(k) selects a unanimous winner, if such a candidate exists.} 

\begin{proof}

Assume the $w^{th}$ candidate in the beta(k) votes is the unanimous winner.
Then $b_w \geq b_j$ for all $j \neq w$, and the beta(k) system selects the $w^{th}$ candidate. 

\end{proof}

\textit{Note: Again, since the result above holds for all k, it holds for approval since beta(1) is equivalent to approval and this property holds trivially for plurality elections.}

\subsection{Pareto efficiency}
\textbf{Definition 4.4} For a set of $n$ votes, a system is \textbf{Pareto efficient} if, for any of the system's possible winners $C_w$ and any $l \neq w$, there is some voter who prefers $C_w$ to $C_l$ (i.e. there exists voter $i$ such that $C_w \succ_i C_l$). In English, a Pareto winner is a candidate such that there does not exist a different candidate who is unanimously preferred and a Pareto efficient system is one that always elects a Pareto winner.
\bigskip

\textit{Proposition: beta(k) is pareto efficient when $k > c-1$.} 
\begin{proof}
If a candidate receives at least one k-vote, then at least one voter prefers that candidate to any other candidate. Thus, if we can define bounds for $k$ such that any beta(k) winner is guaranteed to have at least one k-vote, the result follows. We will consider two cases:\bigskip

\textbf{Case 1:} Every candidate receives at least one k-vote. In this case, any winner will be Pareto.\bigskip

\textbf{Case 2:} There is at least one candidate who receives no k-votes, so there are at most $c-1$ candidates that do receive a k-vote. Notice that the candidate(s) who does not receive a k-vote can have no more than $n$ total points under beta(k). Among the other candidates, at least one must have received at least $n/(c-1)$ k-votes. Thus, by setting $k > c-1$, at least one of the candidates with a k-vote will earn more than $n$ points under beta(k). Therefore any winner under such a system must have at least one k-vote.\bigskip

Taking into consideration the cases above, it is clear that for $k > c-1$, any beta(k) winner must be Pareto.

\end{proof}
\textit{Note: It is trivial to see that any plurality winner is Pareto. Such a winner must have at least one plurality vote, implying that at least one voter prefers that candidate to every other candidate. Notice, however, that with the bound established above, it can be shown that approval does not always select a Pareto winner. Consider a case where there is a tie between two candidates under approval and both candidates are approved by the same voters, but one candidate is unanimously preferred over the other. Then both of these candidates are approval winners, but only one is Pareto.}

\section{Conclusion}
In this paper we derived some interesting properties for beta(k) and saw how its performance compares to plurality and approval. In particular, for certain $k$, beta(k) is no "worse" than plurality or approval from the results of 3.2 and 3.7. Furthermore, beta(k) has an advantage over approval by being fully Pareto-efficient for $k > c-1$. This result is not hard to extend to show that if a certain condition is satisfied by either plurality or approval but not the other, then that condition is satisfied under beta(k) for a certain range of $k$. In particular, if the given criterion is satisfied by plurality but not by approval, then the criterion is satisfied under beta(k) for $k$ greater than some lower bound; if the criterion is satisfied by approval but not plurality, then the criterion is satisfied under beta(k) for $k$ less than some upper bound. We conjecture that if a criterion is satisfied by both plurality and approval, then it is satisfied by beta(k) for any $k>1$. Beta(k) could be a good alternative to approval or plurality in elections where the number of voters is large and the number of candidates is small. In such cases, one could choose a $k$ large enough to guarantee Pareto results and this $k$ would still be small enough that the beta(k) would be similar to approval.
\bigskip

There are further ways to determine the viability of beta(k). This would include checking whether it satisfies other common voting criteria (i.e. Condorcet-winner, consistency, et cetera) and seeing how it performs in Monte Carlo simulations. Additionally, it would be interesting to see what possible voting strategies exist for beta(k) and how they affect the outcome of beta(k) elections. A method for determining rational voter strategies may be found in (Myerson and Weber, 1993).

\section{Bibliography}
\hangindent=0.7cm
Ace Project: The Electoral Knowledge Network. (Accessed January 2019) \textit{First Past the Post (FPTP)}. Available at \texttt{http://aceproject.org/main/english/es/esd01.htm}
\bigskip

Bassi, Anna (2015) Voting Systems and Strategic Manipulation: An Experimental Study. \textit{Journal of Theoretical Politics} 27(1): 58-85
\bigskip

Brams, Steven J and Fishburn, Peter (2007) \textit{Approval Voting}. New York: Springer.
\bigskip

Cox, Gary W (1985) Electoral Equilibrium under Approval Voting. \textit{American Journal of Political Science} 29(1): 112-118.
\bigskip

Karlin, Anna R and Peres, (2017) \textit{Game Theory Alive!} Rhode Island: American Mathematical Society.
\bigskip

Myerson, Roger B and Weber, Robert (1993) A Theory of Voting Equilibria. \textit{American Political Science Review} 87(1): 102-114. 
\bigskip

Niou, Emerson (2001) Strategic Voting Under Plurality and Runoff Rules. \textit{Journal of Theoretical Politics} 13(2): 209-227.
\bigskip

Riker, William H (1982) The Two-Party System and Duverger's Law: An Essay on the History of Political Science. \textit{American Political Science Review} 76(4): 753–66. 
\bigskip

\end{document}